\documentclass{article}

\usepackage[margin=1in]{geometry}
\usepackage{amsmath,amssymb,amsthm,enumerate,url,siunitx,multirow,hhline,graphicx,cancel}
\usepackage{unicode-math}

\newtheorem{theorem}{Theorem}

\newtheorem{lemma}[theorem]{Lemma}

\newtheorem{remark}[theorem]{Remark}
\newtheorem{example}[theorem]{Example}

\newcommand{\paren}[1]{\left(#1\right)}
\newcommand{\bracket}[1]{\left[#1\right]}
\renewcommand{\brace}[1]{\left\{#1\right\}}

\newcommand{\F}{{\mathbb{F}}}
\newcommand{\nnz}[1]{\mathrm{nnz}\paren{#1}}
\newcommand{\nz}[1]{\mathrm{nz}\paren{#1}}
\newcommand{\cat}{\mathrm{cat}}
\newcommand{\lex}{\mathrm{lex}}
\newcommand{\lexmin}{\mathrm{lexmin}}
\newcommand{\can}[1]{\mathrm{can}\paren{#1}}
\newcommand{\GL}{\mathrm{GL}}
\newcommand{\ptwise}{\mathrm{ptwise}}
\newcommand{\rk}[1]{\mathrm{rk}\paren{#1}}
\newcommand{\M}[1]{\begin{bmatrix}#1\end{bmatrix}}
\newcommand{\MA}[2]{\bracket{\begin{array}{#1}#2\end{array}}}

\renewcommand{\span}[1]{\mathrm{span}\paren{#1}}
\renewcommand{\dim}[1]{\mathrm{dim}\paren{#1}}

\newcommand{\cpd}[1]{\left[\!\left[#1\right]\!\right]}
\newcommand{\rowspace}[1]{\mathrm{rowspace}\paren{#1}}
\newcommand{\maxrank}{\mathcal{R}}

\let\oldtextbf=\textbf
\renewcommand\textbf[1]{{\boldmath\oldtextbf{#1}}}

\title{Maximal rank of $4\times 4\times 4$ and $k\times 4\times 3$ tensors over $\F_2$}
\author{Jason Yang}
\date{September 2026}

\begin{document}

\maketitle

\begin{abstract}
We determine the maximum rank of $4\times 4\times 4$ and $k\times 4\times 3$ multidimensional arrays over the finite field $\F_2$. These results are made possible by a new method of enumerating non-isomorphic tensors.
\end{abstract}

\section{Introduction}
The rank of a tensor (multidimensional array) is the smallest number of vector outer products that it can be expressed as a sum of.
Computing the rank of an arbitrary tensor is NP-hard \cite{shitov}, and looser properties of tensor rank remain difficult to analyze, such as the maximum possible rank for a given shape. Over finite fields, research is especially sparse, as one cannot use tools from algebraic geometry as easily as over the complex numbers.

Our goal is solve maximum rank for new tensor shapes over the finite field $\F_2$, focusing on 3-dimensional shapes.
Existing work has solved the shapes $m\times n\times 2$ \cite{mxnx2}; $3\times 3\times 3$ \cite{bremner}; and $4\times 3\times 3$, $5\times 3\times 3$, and $4\times 4\times 3$: \cite{yang}.
This work solves $4\times 4\times 4$ and all remaining $k\times 4\times 3$.

We first describe a recursive method of enumerating non-isomorphic tensors that is exponentially faster than the standard orbit-finding algorithm used in \cite{bremner}.
We then compute the rank of each tensor individually, using an improved version of our rank-finding algorithm in \cite{yang}. Surprisingly, we found this method to be faster than a breadth-first search over isomorphism classes.

\subsection{Definitions}
Let $T$ be a tensor with shape $n_0\times\dots\times n_{D-1}$.
A decomposition of $T$ with rank $R$ is a list of vector tuples $(a^{(r)}_d)_{0\le d<D},\ 0\le r<R$ such that $T=\sum_r a^{(r)}_0\times\dots\times a^{(r)}_{D-1}$, where $\times$ denotes the \textit{outer product}.
Alternatively, it is a list of \textit{factor matrices} $A_d\in\F^{n_d\times R}$ such that $T_{i_0,\dots,i_{D-1}}=\sum_r \prod_d (A_d)_{i_d,r}$ for all indices $(i_0,\dots,i_{D-1})$.
The rank of $T$, denoted $\rk{T}$, is the minimum possible value of $R$.

The \textit{axis-$d$} rank of $T$ is the dimension of the linear span of its axis-$d$ slices. We call $T$ \textit{axis-$d$ concise} if its axis-$d$ rank equals $n_d$, and \textit{concise} if $T$ is axis-$d$ concise for all axes $d$.

\subsubsection{Notation}
\begin{itemize}
    \item The outer product of two tensors $A, B$ is denoted as $(A\times B)_{i_0,\dots,i_{D-1},\ j_0,\dots,j_{E-1}} = A_{i_0,\dots,i_{D-1}} B_{j_0,\dots,j_{E-1}}$.

    \item The matrix product of a tensor $T$ with a \textit{matrix} $M$ along axis $d$ is denoted as \[M\times_d T=\bracket{\sum_{i_d} M_{i',i_d} T_{i_0,\dots,i_d,\dots,i_{D-1}}}_{i_0,\dots,i',\dots,i_D{-1}}.\]

    We write $M$ to the left of $T$, unlike most sources, as it looks more similar to standard matrix multiplication.
    For brevity, we omit parentheses when applying multiple matrix products to one tensor, i.e. $A \times_0 B\times_1 C\times_2 T$ means $A\times_0 (B\times_1 (C\times_2 T))$.

    If $M$ is a row vector, it is treated like a $1\times n_d$ matrix and the $d$-th axis of the resulting product is discarded.

    \item A tensor decomposition with factor matrices $A_0,\dots,A_{D-1}$ is also written as $\cpd{A_0,\dots,A_{D-1}}$.

    \item A 1-hot vector with a 1 at index $i$ and 0s everywhere else is denoted as $e_i$; the length is implied by context.

    \item We use 0-indexing and NumPy notation to denote elements or slices of a tensor:
    \begin{itemize}
        \item writing $a:b$ on an axis denotes the range of indices $\brace{a,\dots,b-1}$ on that axis;

        when omitted, $a$ and $b$ are set to 0 and $n$, respectively, where $n$ is the length of the tensor along the axis;

        \item writing $\cdots$ after an axis denotes the ranges of all indices on all remaining axes.
    \end{itemize}

    For example, $A_{:,j}$ is the $j$-th column of a 2D tensor $A$,
    and $T_{i,\dots}$ is the $i$-th slice along axis 0 of an tensor $T$ with one or more dimensions.

    \item $O^*$ asymptotic complexity ignores polynomial factors.

    \item $[n]$ denotes the set $\brace{0,\dots,n-1}$.
\end{itemize}

\section{Tensor enumeration}
Two tensors $T,T'$ of shape $n_0\times\dots\times n_{D-1}$ are \textit{isomorphic} if there exist invertible matrices $Q_0,\dots,Q_{D-1}$ such that $Q_0\times_0 (Q_1\times_1 \dots T)=T'$. We denote this relation as $T\sim T'$.
It is clear that isomorphic tensors have equal rank.

Define the \textit{canonicalization} of a tensor $T$ to be the lexically minimum tensor it is isomorphic to, under row-major ordering of elements.
The following are immediate:

\begin{lemma}
\label{prefix}
If $T=\can{T}$, then for any $p$, $T_{:p,\dots}=\can{T_{:p,\dots}}$.
\end{lemma}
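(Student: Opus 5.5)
Let $T'=T_{:p,\dots}$ denote the first $p$ axis-0 slices of $T$. I would argue by contradiction, using two facts: lexicographic comparison in row-major order is compatible with taking prefixes along axis 0, and any isomorphism of $T'$ extends to one of $T$. In row-major order the entries of $T'$ are exactly the first $p\cdot n_1\cdots n_{D-1}$ entries of the flattening of $T$, so $T'$ is a prefix of $T$ in the order used to define $\can{\cdot}$.

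Suppose $T'\neq\can{T'}$. Then there are invertible $Q_0\in\GL_p$ and $Q_1,\dots,Q_{D-1}$ such that $S'=Q_0\times_0 Q_1\times_1\cdots Q_{D-1}\times_{D-1}T'$ is lexically smaller than $T'$. I would extend this to an isomorphism of all of $T$ by taking $\tilde Q_0=\M{Q_0 & 0\\ 0 & I_{n_0-p}}$, which is invertible, and keeping $Q_1,\dots,Q_{D-1}$. Set $S=\tilde Q_0\times_0 Q_1\times_1\cdots T$.

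The next step is to check that $S_{:p,\dots}=S'$. The first $p$ rows of $\tilde Q_0$ only combine the first $p$ axis-0 slices of $T$, so $(\tilde Q_0\times_0 T)_{:p,\dots}=Q_0\times_0 T'$. Products along axes $d\ge1$ act slice-wise on the axis-0 slices, so they commute with taking the prefix $:p$ on axis 0.

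Now $S\sim T$, and in row-major order $S$ and $T$ agree with $S'$ and $T'$ on their common leading block. The first position where $S'$ and $T'$ differ is therefore also the first position where $S$ and $T$ differ, and it has the same ordering, so $S<_\lex T$. This contradicts $T=\can{T}$, hence $T'=\can{T'}$.

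The only point needing care is this prefix-compatibility: that row-major order puts axis 0 outermost, so the sub-tensor really is a prefix of the flattening. The rest is routine.
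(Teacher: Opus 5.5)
Your proof is correct and is the same as the paper's: you extend the prefix isomorphism block-diagonally by $I_{n_0-p}$ on axis 0, keep the other $Q_d$, and use that row-major lex order compares the axis-0 prefix first, which contradicts the minimality of $T$. The only difference is that you spell out the prefix-compatibility and slice-wise commutation steps, which the paper leaves implicit.
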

\begin{proof}
Suppose $\can{T_{:p,\dots}}=Q_0\times_0 Q_1\times_1 \dots T_{:p,\dots} <_\lex T_{:p,\dots}$: then $\M{Q_0\\&I_{n_0-p}}\times_0 Q_1\times_1 \dots T <_\lex T$: contradiction.
\end{proof}

\begin{lemma}
\label{zero-prefix}
Let $r=\dim{\span{\brace{T_{i,\dots}:0\le i<n_0}}}$: then $\can{T}_{:n_0-r,\dots}=0$.
\end{lemma}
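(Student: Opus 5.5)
The plan is to build, for any tensor $T$, an explicit tensor $T'\sim T$ whose first $n_0-r$ axis-0 slices are zero, and then show that $\can{T}$ is lexically no larger than $T'$ on this block. The one fact doing the work is the ordering: under row-major order the slices $T_{0,\dots},T_{1,\dots},\dots$ are compared in sequence, so $T_{:n_0-r,\dots}$ is a prefix of the flattened tensor. Also, entries in $\F_2$ are nonnegative, so the all-zero block is the lexically smallest possible prefix.

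First I construct $T'$. The slices $T_{i,\dots}$ span a space of dimension $r$, so the row-operation argument behind Gaussian elimination gives an invertible $Q_0$ such that the slices of $Q_0\times_0 T$ are $r$ linearly independent slices followed by $n_0-r$ zero slices. Concretely, choose a basis of the linear dependencies among the slices, i.e. a basis of the left kernel of the $n_0\times(\prod_{d>0}n_d)$ matrix obtained by flattening $T$ along axis 0. This kernel has dimension $n_0-r$. Put these kernel vectors as the first $n_0-r$ rows of $Q_0$, and complete them to an invertible matrix. Then $T'=Q_0\times_0 T$ satisfies $T'\sim T$ and $T'_{:n_0-r,\dots}=0$.

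Next I compare. Since $\can{T}$ is the lexically minimum tensor isomorphic to $T$, we have $\can{T}\le_\lex T'$. Suppose $\can{T}_{:n_0-r,\dots}\ne 0$. The first position where $\can{T}$ and $T'$ differ then lies inside the prefix block $:n_0-r$. At that position $T'$ has $0$ and $\can{T}$ has a nonzero entry, which in $\F_2$ means $1>0$. So $\can{T}>_\lex T'$, a contradiction. Hence $\can{T}_{:n_0-r,\dots}=0$.

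The only point needing care is whether the isomorphism acting on the other axes could matter. It does not, because I only need one witness $T'$ in the isomorphism class with a zero prefix. The remaining subtlety is the ordering of $\F_2$, with $0<1$. Over a general field one would need $0$ to be the least element of whatever order is used, and the paper's setting over $\F_2$ gives this directly.
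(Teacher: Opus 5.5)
Your proof is correct and follows the same approach as the paper. The paper's one-line argument (no tensor isomorphic to $T$ has more than $n_0-r$ zero axis-0 slices) implicitly relies on the two facts you make explicit: $n_0-r$ leading zero slices are attainable via an invertible axis-0 change of basis, and an all-zero prefix is lexically minimal because $0$ is the least field element. There is one cosmetic slip: your Gaussian-elimination sentence places the zero slices last, but your concrete left-kernel construction correctly places them first, which is what the lexical comparison needs.
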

\begin{proof}
By definition of $r$, the maximum number of all-zeros axis-0 slices any tensor $T'\sim T$ can have is $n_0-r$.
\end{proof}

For $D=2$, we can solve canonicalization immediately. As a consequence, the canonicalization of a matrix only depends on its rank, and lower rank corresponds to lower lexical order.
\begin{lemma}
\label{2d}
For any matrix $M\in\F^{m\times n}$,
$\can{M}=\M{0_{(m-r)\times (n-r)}&0_{(m-r)\times r}\\0_{r\times (n-r)}&H}$,
where $r=\rk{M}$
and $H=\underbrace{\M{&&1\\&⋰\\1}}_r$. 
\end{lemma}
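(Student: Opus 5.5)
Since $P\times_0 Q\times_1 M=PMQ^\top$ for a matrix, the isomorphism class of $M$ is exactly its equivalence class under invertible row and column operations. So it suffices to show that the displayed matrix, call it $C_r$, is isomorphic to $M$ and is lexically $\le$ every matrix of rank $r$.

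\textbf{Isomorphism.} The matrix $C_r$ contains the anti-identity $H$, which is invertible, so $\rk{C_r}=r$. Two matrices of the same shape and the same rank are equivalent by the standard rank normal form: both are equivalent to $\M{I_r&0\\0&0}$. Hence $C_r\sim M$.

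\textbf{Minimality.} Let $M'$ be any matrix of rank $r$, read in row-major order. I will compare $M'$ with $C_r$ one row at a time, using induction on $r$.

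First look at the top row block. By Lemma~\ref{zero-prefix}, applied along axis 0, the first $m-r$ rows of $\can{M}$ are zero, and these rows agree with $C_r$. If $M'$ has a nonzero entry among its first $m-r$ rows, then $M'>_\lex C_r$ at the first such entry. So we may assume these rows of $M'$ are zero.

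The remaining $r$ rows of $M'$ then have rank $r$, so they are linearly independent. The next step is to bound the position of the first nonzero entry in each of these rows.

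\textbf{Main obstacle.} The key claim is the following. Let $j$ index the rows within this block, from $0$ to $r-1$, and let row $j$ have its first nonzero entry at column $c_j$. If the earlier rows of $M'$ match $C_r$, then $c_j\le n-1-j$.

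The reason is that the earlier rows $0,\dots,j-1$ of the block are $e_{n-1},\dots,e_{n-j}$. If row $j$ had $c_j>n-1-j$, it would be supported on the last $j$ columns. It would then lie in the span of those earlier rows, contradicting linear independence.

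It follows that if $c_j<n-1-j$, row $j$ of $M'$ has a $1$ at a column where $C_r$ has a $0$, and all earlier entries agree, so $M'>_\lex C_r$. Otherwise $c_j=n-1-j$, so row $j$ agrees with $C_r$ up to and including that entry. Its remaining entries lie in the last $j$ columns. If any of them is nonzero, then $M'>_\lex C_r$, since $C_r$ has zeros there. If all are zero, row $j$ equals $e_{n-1-j}$, which is the corresponding row of $C_r$, and the induction continues.

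Every branch gives $M'\ge_\lex C_r$, so $C_r$ is the lexical minimum of the isomorphism class and $\can{M}=C_r$.

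\textbf{Consequence.} Comparing $C_r$ with $C_{r+1}$, the first difference occurs in row $m-r-1$, where $C_r$ is zero and $C_{r+1}$ has a $1$. Hence lower rank gives lower lexical order, as stated before the lemma.
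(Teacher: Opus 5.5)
Your proof is correct and follows essentially the paper's route: the top $m-r$ rows must vanish, and then, row by row, $e_{n-1-j}$ is the lexically smallest vector outside the span of the earlier rows $e_{n-1},\dots,e_{n-j}$, which is what your key claim and case analysis establish. You are more explicit than the paper in two places: you check separately that $C_r$ lies in the orbit (via rank normal form), and you compare against an arbitrary rank-$r$ matrix. Like the paper, you tacitly assume $0<1$ are the two smallest field elements, which is automatic over $\F_2$; also, the induction is on $j$ rather than $r$, and the appeal to Lemma~\ref{zero-prefix} is unnecessary.
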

\begin{proof}
From Lemma \ref{zero-prefix}, $\can{M}_{:m-r,:}=0$.
Then for each $0\le i<r$,
$\can{M}_{m-r+i,:}$ is the lexically smallest row vector not in the row-span of $\can{M}_{m-r:m-r+i-1,:}$.
The desired expression for $\can{M}$ is then obtained via a standard proof by induction.
\end{proof}

\subsection{Enumeration}
For the remainder of this section, we focus on $D=3$.
To enumerate canonical $n_0\times n_1\times n_2$ tensors,
the previous lemmas suggest the following ideas:
\begin{itemize}
    \item generate $k\times n_1\times n_2$ canonicals that are axis-0 concise, for $k\le n_0$, then prefix-pad with all-zeros slices;

    \item to generate $n_0\times n_1\times n_2$ canonicals,
    iterate over $\cat_0(T,M)$ for canonical $T\in\F^{(n_0-1)\times n_1\times n_2}$ and arbitrary $M\in\F^{n_1\times n_2}$, where $\cat_0$ denotes concatenation along axis 0 and $M$ is implicitly expanded to $1\times n_1\times n_2$;

    we call $\cat_0(T,M)$ an \textit{augment} of $T$;

    \item use Lemma \ref{2d} to enumerate $1\times n_1\times n_2$ canonicals.
\end{itemize}

Below is the first version of our algorithm; for brevity, we only return the list of $n_0\times n_1\times n_2$ axis-0 concise canonicals.
We describe later how to check isomorphism.

\begin{itemize}
    \item $L\gets []$
    \item for each $S\in\F^{(n_0-1)\times n_1\times n_2}$ that is canonical and axis-0 concise, in lex order:
    \begin{itemize}
        \item for each $M\in\F^{n_1\times n_2}$ in lex order:
        \begin{itemize}
            \item $T\gets \cat_0(S,M)$
            \item if $T$ is not axis-0 concise:
            \begin{itemize}
                \item continue
            \end{itemize}
            \item if $\forall T'\in L: T\not\sim T'$:
            \begin{itemize}
                \item add $T$ to $L$
            \end{itemize}
        \end{itemize}
    \end{itemize}
    \item return $L$
\end{itemize}

We iterate over $\le \phi(n_0-1,n_1,n_2) |\F|^{n_1 n_2}$ many $T$ and run $\Theta(\phi(n_0,n_1,n_2))$ many isomorphism checks for each $T$ on average, where $\phi(n_0,n_1,n_2)$ is the number of $n_0\times n_1\times n_2$ canonicals.

The above algorithm does not take advantage of the fact that many canonical tensors share prefixes, so running many individual isomorphism checks may lead to wasted work.
To solve this problem, we introduce a new relation between tensors $T\in\F^{n_0\times n_1\times n_2}$ and $S\in\F^{n'\times n_1\times n_2}$ for $n'\le n_0$:

\[(T\ge_0 S)
:=
(\exists P\in\F^{n'\times n_0} \textrm{ full row-rank},\ Q_1\in\GL(n_1,\F),\ Q_2\in\GL(n_2,\F) \textrm{ s.t. } P\times_0 Q_1\times_1 Q_2\times_2 T=S).\]

It is clear that $T\sim T'$ implies $T\ge_0 T'_{:p,\dots}$ for all $p$.
Taking the contrapositive allows us to modify the algorithm as follows:
\begin{itemize}
    \item for $0\le k<n_0$:
    \begin{itemize}
        \item $L_k\gets [S\in\F^{k\times n_1\times n_2}:\ S \textrm{ canonical and axis-0 concise, in lex order}]$
    \end{itemize}
    \item $L_{n_0}\gets []$
    \item for $(S\in L_{n_0-1},\ M\in \F^{n_1\times n_2})$ in lex order:
    \begin{itemize}
        \item $T\gets\cat_0(S,M)$
        \item if $T$ is not axis-0 concise:
        \begin{itemize}
            \item continue
        \end{itemize}
        \item if $\forall 1\le k\le n_0,\ \forall S'\in L_k \textrm{ s.t. } S'<_\lex T_{:k,\dots} \ \& \ S'_{:k-1,\dots}=T_{:k-1,\dots}:\ T\not\ge_0 S'$:
        \begin{itemize}
            \item add $T$ to $L_{n_0}$
        \end{itemize}
    \end{itemize}
    \item return $L_{n_0}$
\end{itemize}

Essentially, we determine if $T$ is canonical by checking whether it can be $\ge_0$-matched to all canonical prefixes lexically smaller than $T$, while skipping redundant prefixes.
Note that we update $L_{n_0}$ while it is also being read.

In the innermost if-statement, we iterate over $\le n_0 |\F|^{n_1 n_2}$ many $S'$, since for each $k$ all but the last axis-0 slice of $S'$ is fixed. Thus, we run that many $\ge_0$-checks for each $T$ instead of $\Theta(\phi(n_0,n_1,n_2))$ many isomorphism checks.

\subsubsection{Tensor matching}
To efficiently determine whether $T\ge_0 S$ for tensors $T\in\F^{n_0\times n_1\times n_2},\ S\in\F^{n'\times n_1\times n_2}$, we first describe a depth-first search (DFS) over the axis-0 matrix that fixes one row at a time and handles the other axes as a base case:
\begin{itemize}
    \item $\mathrm{iso12}(T,T'):=(\exists Q_1\in\GL(n_1,\F),\ Q_2\in\GL(n_2,\F) \textrm{ s.t. } Q_1\times_1 Q_2\times_2 T=T')$
    \item def $\mathrm{dfs}(P\in\F^{p\times n_0})$:
    \begin{itemize}
        \item // precondition: $P$ has full row-rank, $\mathrm{iso12}(P\times_0 T, S_{:p,\dots})$ is True
        \item if $p=n'$:
        \begin{itemize}
            \item return True
        \end{itemize}
        \item for $v\in \F^{n_0}$:
        \begin{itemize}
            \item $P'\gets \MA{c}{P\\\hline v}$
            \item if $P'$ has full row-rank and $\mathrm{iso12}(P'\times_0 T, S_{:p+1,\dots})$ and $\mathrm{dfs}(P')$:
            \begin{itemize}
                \item return True
            \end{itemize}
        \end{itemize}
        \item return False
    \end{itemize}
    \item return $\mathrm{dfs}(0_{0\times n_0})$
\end{itemize}

Each $\ge_0$ check makes $\le \F^{n_0 n'}$ many iso12 calls in the worst case, but may make significantly fewer in practice due to pruning.

Instead of trying to make iso12 faster, we notice that across all checks $T\ge_0 S$ during tensor enumeration, $S$ has relatively few distinct values; in fact, $S$ is in $L_0\sqcup\dots\sqcup L_{n_0-1}$, so it has exactly $\phi(n_0-1,n_1,n_2)$ possibilities.

We can take advantage of this property if we apply axis-1 and axis-2 operations during the DFS to match prefixes of $T$ to $S$:
\begin{itemize}
    \item def $\mathrm{dfs}(P\in\F^{p\times n_0},\ T\in\F^{n_0\times n_1\times n_2})$:
    \begin{itemize}
        \item // precondition: $P\times_0 T=S_{:p,\dots}$
        \item if $p=n'$: return True
        \item for $v\in \F^{n_0}$:
        \begin{itemize}
            \item $P'\gets \MA{c}{P\\\hline v}$
            \item if $P'$ has full row-rank and $\mathrm{iso12}(P'\times_0 T, S_{:p+1,\dots})$:
            \begin{itemize}
                \item find invertible $Q_1,Q_2$ s.t. $Q_1\times_1 Q_2\times_2 (P'\times_0 T)=S_{:p+1,\dots}$
                \item if $\mathrm{dfs}(P',\ Q_1\times_1 Q_2\times_2 T)$: return True
            \end{itemize}
        \end{itemize}
        \item return False
    \end{itemize}
    \item return $\mathrm{dfs}(0_{0\times n_0},\ T)$
\end{itemize}

Now, every call $\mathrm{iso12}(U,U')$ for tensors $U,U'\in\F^{n'\times n_1\times n_2}$ satisfies the properties $U_{:n'-1,\dots}=U'_{:n'-1,\dots}$ and $U'\in L_0\sqcup\dots\sqcup L_{n_0-1}$.

Define $V:=U_{:n'-1,\dots}, M=U_{n'-1,\dots}, M'=U'_{n'-1,\dots}$:
then $\mathrm{iso12}(U,U')$ is equivalent to $M$ and $M'$ belonging to the same orbit of $\F^{n_1\times n_2}$ under the group action

\[G_V:=\brace{(M\mapsto P\times_0 Q\times_1 M): P\in\GL(n_1,\F),\ Q\in\GL(n_2,\F),\ P\times_1 Q\times_2 V=V}.\]

In this situation, it is feasible to precompute and save all group orbits.
It is also straightforward to return a group action sending an arbitrary $M$ to some other $M'$, by precomputing actions that send every possible $M$ to a consistent orbit representative.
Thus, each $\mathrm{iso12}(U,U')$ check can be evaluated in $O^*(1)$ time.

\subsubsection{Combined matching}
If we use our improved algorithm for $T\ge_0 S$-checks to determine whether a given tensor $T$ is canonical, there is still wasted work because many of the $S$-tensors share prefixes with each other. Specifically, two different checks $T\ge_0 S$ and $T\ge_0 S'$ will visit the exact same tree of $(P,T)$-states until after some depth.
Our fix is to run a unified search over all such $S$, ultimately removing the $\ge_0$-relation.

\begin{itemize}
    \item // goal: determine if $T_{n_0\times n_1\times n_2}$ is canonical
    \item // precondition: $T$ is axis-0 concise
    \item for $0\le k<n_0$:
    \begin{itemize}
        \item $\mathcal{S}_k\gets \mathrm{sorted}([S_{(k+1)\times n_1\times n_2}: S \textrm{ canonical, axis-0 concise, } S_{:k,\dots}=T_{:k,\dots},\ S<_\lex T_{:k+1,\dots}])$
    \end{itemize}
    \item $\mathcal{E}\gets [(0_{0\times n_0}, T)]$  (``search states")
    \item for $k=0,\dots,n_0-1$:
    \begin{itemize}
        \item // precondition: $\forall (P,T')\in \mathcal{E}:\ P\times_0 T'=T_{:k,\dots}$
        \item $\mathcal{E}'\gets []$
        \item for $(P,T')\in\mathcal{E},\ v\in\F^{n_0}$:
        \begin{itemize}
            \item $P'\gets \MA{c}{P\\\hline v}$
            \item if $P'$ not full row-rank:
            \begin{itemize}
                \item continue
            \end{itemize}
            \item if $\exists S\in \mathcal{S}_k \ | \ \mathrm{iso12}(P'\times_0 T',\ S)$:
            \begin{itemize}
                \item return False
            \end{itemize}
            \item if $\mathrm{iso12}(P'\times_0 T',\ T_{:k+1,\dots})$:
            \begin{itemize}
                \item find invertible $Q_1,Q_2$ such that $Q_1\times_1 Q_2\times_2 (P'\times_0 T')=T_{:k+1,\dots}$
                \item add $(P',\ Q_1\times_1 Q_2\times_2 T')$ to $\mathcal{E}'$
            \end{itemize}
        \end{itemize}
        \item $\mathcal{E}\gets \mathcal{E}'$
        \item // at this point, $\mathcal{E}$ must contain $(I_{:k+1,:},\ T_{:k+1,:})$ and therefore be nonempty
    \end{itemize}
    \item return True
\end{itemize}

Notice that $\mathcal{S}_{n_0-1}$ is a subset of $L_{n_0}$, which is updated during the main enumeration procedure. The other lists $\mathcal{S}_k$ can be reused across different $T$.

Another way to understand this procedure is that for each $k=0,\dots,n_0-1$, we attempt to $\ge_0$-transform $T$ into a $k\times n_1\times n_2$ canonical prefix that is nonstrictly lexically smaller than $T$, while tracking all possible axis-0 operations $P$ that could be used in such a transformation.

\subsubsection{Axis-0 canonicalization}
As a final speedup, we check if each $T$ is canonical \textit{along axis 0 only} before checking if it is truly canonical. Define

\[\mathrm{can}_0(T):=\lexmin_{Q\in\GL(n_0,\F)} Q\times_0 T.\]

Computing $\mathrm{can}_0(T)$ can be done in $O^*(|\F|^{n_0})$ time, by iterating over every row vector $v\in \F^{n_0}$ in lexical order of $v\times_0 T$ and greedily finding the next vector $v$ that is linearly independent from previously selected vectors.

\section{Tensor rank}
\label{sec:cpd}
We use our depth-first search algorithm from \cite{yang} to determine the rank of each canonical tensor, but with stronger pruning. For completeness, we describe the algorithm in full.

Let $T\in\F^{n_0\times\dots\times n_{D-1}},\ n_0\ge\dots\ge n_{D-1}$ be a concise tensor and $R\ge n_0$ be the rank threshold.
WLOG $\rk{T}\le R$ iff there exist $A_d\in\F^{n_d\times n_0},\ B_d\in\F^{n_d\times (R-n_0)}$ s.t. $T=\cpd{\dots,\MA{c|c}{A_d&B_d},\dots_d}$ and $A_0$ is invertible.
Letting $Q:=(A_0)^{-1}$ and $X=QB_0$, we have $Q\times_0 T=\cpd{I,A_1,A_2,\dots}+\cpd{X,B_1,B_2,\dots}$.
Subtracting both sides by $\cpd{X,B_1,\dots}$ and splitting the equation by its axis-0 slices yields $Q_{i,:}\times_0 T - \cpd{X_{i,:},B_1,B_2,\dots}=(A_1)_{:,i}\times(A_2)_{:,i}\times\dots$ for all $0\le i<n_0$.

Consider fixing all $(B_d)_{d\ge 1}$: then the system has a solution iff the set
\[S:=\brace{(q,x): q\in\F^{n_0},\ x\in\F^{R-n_0},\ a_d\in\F^{n_d},\ q\times_0 T - \cpd{x,B_1,B_2,\dots}=a_1\times\dots\times a_{D-1}}\] satisfies $\dim{\span{\brace{q: (q,x)\in S}}}=n_0$.
This can be determined with two methods:
\begin{enumerate}[(1)]
    \item Enumerate all $(q,x)$ such that $\rk{q\times_0 T - \cpd{x,B_1,B_2,\dots}}\le 1$;
    This method takes $O^*(|\F|^R)$ time,
    because checking whether a tensor $T$ has rank $\le 1$ is equivalent to checking whether its axis-$d$ rank is $\le 1$ for all $d$.

    \item For all $(a_2,\dots,a_d)$, construct a basis for the linear system $q\times_0 T - \cpd{x,B_1,B_2,\dots}=a_1\times\dots\times a_{D-1}$ over $(q,x,a_1)$;
    then find the span of the union of these bases.
    This method takes $O^*(|\F|^{\sum_{d\ge 2} n_d})$ time.
\end{enumerate}

Repeating for all $(B_d)_{d\ge 1}$, the total running time is $O^*(|\F|^{(R-n_0)\paren{\sum_{d\ge 1} n_d} + \min\paren{R,\ \sum_{d\ge 2} n_d}})$.
In our implementation, we only use (1) as it is simpler and the difference in running time is insignificant for the tensor shapes we are interested in.

\subsection{Depth-first search pruning}
To convert the aforementioned algorithm into DFS, we partition the variables in $(B_d)_{d\ge 1}$ into lists of columns $((B_d)_{:,r})_{d\ge 1}$ for some $r$, then fix them for each $r=0,\dots,\ R-n_0-1$ at a time. Doing so allows us to potentially prune nodes that fail a sufficiency check of our choosing.

Suppose a node has fixed the leftmost $p$ columns of each $(B_d)_{d\ge 1}$; note that $(B_0)_{:,:p}$ is still unknown.
Isolating the unknown variables as much as possible in the equation $T=\cpd{\MA{c|c}{A_0&B_0},\dots}$ yields $T-\cpd{(B_0)_{:,:p},\dots}=\cpd{C_0,\dots}$ for $C_d:=\MA{c|c}{A_d&(B_d)_{:,p:}}$.

Consider multiplying the left hand side of the equation along axis 0 with a vector $v\in \F^{n_0}$, then calculating the rank. Observe that

\[\rk{v\times_0 T-\cpd{v(B_0)_{:,:p},(B_1)_{:,:p},\dots}}\le\nnz{vC_0},\]

where $\nnz{\cdot}$ counts the number of nonzero elements.

We then ``forget" $(B_0)_{:,:p}$ by replacing $v(B_0)_{:,:p}$ with a free vector $w$. Because we cannot control how $w$ is related to $v$, we must take the minimum of the left hand side to make the inequality as loose as possible:

\[\min_{w\in\F^p}\rk{v\times_0 T-\cpd{w,(B_1)_{:,:p},\dots}}\le\nnz{vC_0}.\]

This condition motivates us to define $f_{T;\ P_1,P_2,\dots}(v):=\min_{w\in\F^p}\rk{v\times_0 T-\cpd{w,P_1,P_2,\dots}}$ and $g_{C_0}(v):=\nnz{vC_0}$; then our pruner is

\[\exists \textrm{full row-rank } C_0\in\F^{n_0\times (R-p)} \textrm{ s.t. } f_{T;\ (B_1)_{:,:p},\dots}\le_\ptwise g_{C_0}.\]

We can restrict $C_0$ to have full row-rank because $A_0$ is invertible.

For each $p$, we precompute $g_{C_0}$ for all $C_0$, up to permutation of columns, and store the maps in a data structure that efficiently supports $\le_\ptwise$-membership queries, treating each $g_{C_0}$ as a list. The number of $C_0$ we enumerate is $O(|\F|^{n_0(R-p)} / (R-p)!)$.

We use one of two data structures:
\begin{enumerate}[1.]
    \item If memory permits, create a bit-array $A$ with a 1 at each $g_{C_0}$ and 0 elsewhere, then compute the multidimensional summed-area table (using boolean ORs of suffixes instead of sums of prefixes);

    \item Else, store all $g_{C_0}$ in a prefix tree, and process each $\le_\ptwise$-membership query by DFS-ing on the tree with pruning.
\end{enumerate}

To reduce memory and computation for the bit-array, we compress the $f$ and $g$ maps:
\begin{itemize}
    \item omit the $v=\vec{0}$ input;
    \item replace $f$ and $g$ with $f':=\max(f-1,0)$ and $g':=\min(g-1,M-1)$, where $M:=n_2\cdot \dots\cdot n_{D-1}$;

    Doing so does not weaken the original point-wise inequality, because we are guaranteed $g_{C_0}(v)\ge 1$ for $v\ne \vec{0}$ (since $C_0$ has full row-rank) and $f_{T;\ (B_1)_{:,:p},\dots}(v)\le M$ for all $v$ (since $f(v)$ is the rank of a $n_1\times n_2\times \dots\times n_{D-1}$-shaped tensor).
\end{itemize}

This improves the bit-array memory from $O((R+1)^{|\F|^{n_0}})$ to $O(\min(R,M)^{|\F|^{n_0}-1})$, e.g. for the tensor shape $4\times 4\times 4$ and rank threshold $R=9$ over the field $\F_2$, it improves memory from $O(10^{16})$ to $O(4^{15})$.

When both methods are too costly, we fall back to the following looser methods, each presented with a short proof under the condition that an admissible $C_0$ exists. For brevity, denote $R'=R-p$. The first two are from our previous work \cite{yang} and the rest are new:

\begin{itemize}
    \item \textbf{(``rref") $\dim{\span{\brace{v:f_{T;\ P_1,P_2,\dots}(v)\le R'-n_0+1}}}=n_0$.}

    For $Q:=(A_0)^{-1}$, we have $QC_0=\M{I&\dots}$; then for each $i$, setting $v=Q_{i,:}$ satisfies $g_{C_0}(v)\le R'-n_0+1$.

    \item \textbf{(``Laskowski", equivalent to \cite{laskowski}) $\sum_{v\in\F^{n_0}} (R'-f_{T;\ P_1,P_2,\dots}(v)) \ge R'|\F|^{n_0-1}$.}

    The left hand side is $\ge \sum_v (R'-g_{C_0}(v))
    =\sum_{v;\ i\in [R']} [v(C_0)_{:,i}=0]
    =\sum_i (\# v \ | \ v(C_0)_{:,i}=0)
    \ge \sum_i |\F|^{n_0-1}
    =R'|\F|^{n_0-1}$.

    \item \textbf{If $\F=\F_2$ and $R'\ge n_0+2$: $\dim{\span{\brace{v:f_{T;\ P_1,P_2,\dots}(v)\le R'-n_0}}}\ge n_0-1$.}

    Let $S:=\span{\brace{v:\nz{vC_0}\ge n_0}}$, where $\nz{}$ denotes the number of zeros; then
    it suffices to prove $\dim{S}\ge n_0-1$.

    WLOG assume $R'=n_0+2$ and $C_0=\M{I&X}$.
    Let $H=\brace{i_0,i_1,\dots}$ denote all $i$ such that $X_{i,:}=\M{1&1}$;
    then $e_{i_0}+e_{i_j}\in S$ for all $j\ne 0$, and $e_i\in S$ for all $i\not\in H$, which amounts to $n_0-1$ many vectors (unless $|H|=0$, in which case there are $n_0$ many vectors).

    \item \textbf{For any $1\le k<n_0$: $\sum_{v\in\F^{n_0}} \binom{R'-f_{T;\ P_1,P_2,\dots}(v)}{k} \ge \binom{R'}{k}|\F|^{n_0-k}$.}

    The left hand side is $\ge \sum_v \binom{R'-g_{C_0}(v)}{k}
    =\sum_{v;\ I\in \binom{[R']}{k}} [v(C_0)_{:,I}=0] 
    =\sum_I (\# v \ | \ v(C_0)_{:,I}=0)
    \ge \sum_I |\F|^{n_0-k}
    =\binom{R'}{k}|\F|^{n_0-k}$.
\end{itemize}

For our computation, we used full pruning (bit-array/prefix tree) for $4\times 4\times 4$ at all rank thresholds we tested, and a mix of all pruning methods for $*\times 4\times 3$.

\section{Maximum rank}
Before finding maximum tensor rank for our desired shapes with the method we have outlined, we establish some bounds to reduce computational work.

Let $\maxrank_\F(k,m,n)$ denote the maximum rank of a $k\times m\times n$ tensor over a ground field $\F$; we write $\maxrank(k,m,n)$ if $\F$ is clear from context or when a statement is true over all ground fields.

We recall the method in \cite{yang} to construct high-rank tensors, which is itself similar to \cite{atkinson}. We give all proofs here for completeness.

\begin{lemma}[Substitution method]
\label{substitution-method}
For a tensor $T\in\F^{n_0\times\dots}$ and nonzero row vector $v$ s.t. $v\times_0 T\ne 0$, there exists some column vector $w$ s.t. $vw\ne 0$ and $\rk{(I-w(vw)^{-1}v)\times_0 T}\le \rk{T}-1$.
\end{lemma}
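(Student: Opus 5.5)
Take an optimal decomposition $T=\sum_{r<R}a^{(r)}_0\times a^{(r)}_1\times\cdots$ with $R=\rk{T}$, written as $T=\cpd{A_0,A_1,\dots}$ with $A_0\in\F^{n_0\times R}$. The plan is to choose $w$ to be one of the axis-0 factor vectors $a^{(r)}_0$, so that the projection $P:=I-w(vw)^{-1}v$ kills exactly that rank-one term.

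First, $vA_0\ne 0$. Indeed, $v\times_0 T=\cpd{vA_0,A_1,\dots}$, and this would vanish if $vA_0=0$, contradicting the hypothesis $v\times_0 T\ne 0$. Hence some column $w:=(A_0)_{:,r_0}$ satisfies $vw\ne 0$, so the scalar $(vw)^{-1}$ exists.

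Next, compute $Pw=w-w(vw)^{-1}(vw)=0$. Multiplying the decomposition along axis 0 gives
\[P\times_0 T=\cpd{PA_0,A_1,\dots}=\sum_{r}(Pa^{(r)}_0)\times a^{(r)}_1\times\cdots,\]
and the $r_0$-th term vanishes because $Pa^{(r_0)}_0=Pw=0$. What remains is a decomposition with $R-1$ terms, so $\rk{P\times_0 T}\le \rk{T}-1$.

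The argument is short. The only step needing care is the nonvanishing of $vA_0$, which follows directly from the identity $v\times_0\cpd{A_0,\dots}=\cpd{vA_0,\dots}$, that is, from multilinearity of the outer-product sum. No case split is needed, and the construction works over any field.
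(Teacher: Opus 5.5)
Your proposal is correct and matches the paper's proof: take an optimal decomposition, pick a term whose axis-0 factor $a_{r^*}$ has $va_{r^*}\ne 0$, set $w=a_{r^*}$, and the projection $I-w(vw)^{-1}v$ annihilates that term. You also spell out why such an $r^*$ exists (from $v\times_0 T=\cpd{vA_0,A_1,\dots}\ne 0$), which the paper leaves implicit.
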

\begin{proof}
Let $T=\sum_r a_r\times \dots$ be an optimal decomposition.
Then there must exist some $r^*$ such that $a_{r^*}$ satisfies $va_{r^*}\ne 0$. Setting $w=a_{r^*}$ results in $(I-w(vw)^{-1}v)\times_0 T$ annihilating the $r=r^*$ term.
\end{proof}

\begin{remark}
An equivalent way to state the substitution method is that there exists some $M\in\F^{(n_0-1)\times n_0}$ with full row-rank s.t. $v\not\in \rowspace{M}$ and $\rk{M\times_0 T}\le \rk{T}-1$.
\end{remark}

The same result applies along any axis. We call this action ``subbing out" $v$.
When $v$ is 1-hot at index $i$, we notate $v$ as a tensor index slice, e.g. $(i,\dots)$ if along axis 0.
Each sub increases the rank lower bound by 1.

\begin{example}
We prove $T=\M{\M{&1\\1}&\M{1&\phantom{0}\\&}}$ has rank 3 over any field.
The upper bound is obvious.
For the lower bound, sub out $(1,:,:)$ to get $\M{\M{x&1\\1}}$ for some arbitrary $x$ (i.e. set $v=\M{0&1},\ M=\M{1&x}$, and the axis to 0).
By inspection, it is clear this tensor has rank 2 no matter what $x$ is. Alternatively, we can sub out two more times at $(:,1,:)$ and $(:,:,1)$.
\end{example}

\begin{lemma}[similar to Proof of Thm. 2 in \cite{atkinson}]
\label{lower-bound-expand}
For any $m,n,m',n',k,p$ with $m'\ge m,\ n'\ge n$ and $p\le m'n'-mn$,
we have $\maxrank(k+p, m', n') \ge \maxrank(k,m,n) + p$.
\end{lemma}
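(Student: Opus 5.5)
Take a tensor $T\in\F^{k\times m\times n}$ with $\rk{T}=\maxrank(k,m,n)$, embed it into the top-left corner of a $k\times m'\times n'$ block, and append $p$ new axis-0 slices. Each new slice is $E_j:=e_{a_j}e_{b_j}^\top$ for a distinct position $(a_j,b_j)\in[m']\times[n']$ outside the $m\times n$ corner. This is possible exactly because $p\le m'n'-mn$. Call the resulting $(k+p)\times m'\times n'$ tensor $T'$. It then suffices to show $\rk{T'}\ge\rk{T}+p$.

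For the lower bound, apply the substitution method (Lemma~\ref{substitution-method}) along axis 0 $p$ times, subbing out the new slices one at a time: first $(k+p-1,\dots)$, then $(k+p-2,\dots)$, and so on down to $(k,\dots)$. The axis-0 slices are linearly independent, so $v\times_0 T'\ne 0$ at every stage, and each sub costs at least one unit of rank. After the $p$ subs we are left with a $k\times m'\times n'$ tensor of rank $\le\rk{T'}-p$. By the remark after Lemma~\ref{substitution-method}, this tensor has the form $M\times_0 T'$ with $M\in\F^{k\times(k+p)}$. It is therefore $T$ (padded, in its corner) plus $\sum_j c_j\otimes E_j$, where the $c_j\in\F^k$ are arbitrary column vectors that we cannot control. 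Equivalently, after an axis-0 change of basis each old slice $T_{i,\dots}$ has picked up some combination of the $E_j$.

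Next, restrict to the $m\times n$ corner. Multiply along axis 1 by the projection onto the first $m$ coordinates and along axis 2 by the projection onto the first $n$ coordinates. Restriction does not increase rank. Since every $E_j$ is supported outside the corner, the perturbations $\sum_j c_j\otimes E_j$ vanish, and what remains is $M_0\times_0 T$ for the $k\times k$ block $M_0$ of $M$. The key point to check is that $M_0$ is invertible. This holds because $M$ has full row rank, and its rows can be taken to have the form $e_i+(\text{combination of the removed coordinates})$. Indeed, each sub $I-w(vw)^{-1}v$ with $v=e_\ell$ only rewrites coordinate $\ell$ in terms of the others, so the surviving coordinates $0,\dots,k-1$ keep identity coefficients. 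Therefore $\rk{T}=\rk{M_0\times_0T}\le\rk{T'}-p$.

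The main obstacle is the bookkeeping in this last step. We must confirm that the composite of the $p$ substitutions, restricted to the original $k$ slices, acts as the identity plus terms involving only the $E_j$, so that after cropping nothing from the new slices pollutes $T$. I would handle this by induction on $p$, doing one substitution at a time and proving $\maxrank(k+1,m',n')\ge\maxrank(k,m,n)+1$ with a single extra slice placed outside the current support. To keep the induction working, the support of the first $k+j$ slices should be a set of positions that grows by one at each step. The cleanest formulation is to state the inductive claim for a fixed set $U\supseteq[m]\times[n]$ of allowed positions with $|U|\le m'n'$. The final tensor then satisfies $\rk{T'}\ge\maxrank(k,m,n)+p$, which gives the stated inequality.
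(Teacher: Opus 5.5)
Your proof is correct and is the paper's. You pad $T$, append $p$ one-hot slices at distinct positions outside the $m\times n$ corner, sub them out along axis~$0$ from index $k+p-1$ down to $k$, and crop to the corner. Your remark that each substitution matrix (zero row dropped) has rows $e_i-c_ie_\ell$, so the original $k$ slices only pick up multiples of the $E_j$, supplies exactly the bookkeeping the paper leaves implicit. One small correction: each sub needs only that the removed slice ($E_j$ plus multiples of later $E_{j'}$) is nonzero, not that all slices of $T'$ are independent, which fails when $T$ is not axis-$0$ concise.

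Drop the induction in your last paragraph. It is unnecessary, and its natural step, ``appending a one-hot slice at a new position raises the maximum rank over tensors supported on $U$ by one,'' is false for non-rectangular $U$, because cropping to $U$ is not a multilinear operation. For example, for $U=\{(0,0),(0,1),(1,0)\}$ the two slices $\M{1&0\\0&0}$, $\M{0&1\\1&0}$ give rank $3$, yet $\maxrank(3,2,2)=3$ by Lemma~\ref{max-rank-one-below}.
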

\begin{proof}
Let $T\in\F^{k\times m\times n}$ have maximum rank. Construct $T'$ by padding $T$ along axes 1 and 2, then adding $p$ many matrices that are 1-hot at distinct $(i,j)$ for each $(i,j)\in [m']\times [n'] \setminus ([m]\times [n])$.
Then $\rk{T'}\ge \rk{T}+p$ because we can sub out $(h,:,\dots)$ for each $h=k+p-1,\dots,k$, then truncate along axes 1 and 2 to get $T$.
\end{proof}

\begin{lemma}[special case of Thm. 2 of \cite{atkinson}]
\label{max-rank-one-below}
$\maxrank(mn-1,m,n)=mn-1$ for $m,n\ge 1$.
\end{lemma}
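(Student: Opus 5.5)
We prove the two inequalities $\maxrank(mn-1,m,n)\le mn-1$ and $\maxrank(mn-1,m,n)\ge mn-1$ separately.

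\textbf{Upper bound.} Let $T\in\F^{(mn-1)\times m\times n}$ and view it as a collection of $m\times n$ axis-0 slices. Rearranging the roles of the axes, $T$ is determined by its $mn$ "fibers" $T_{:,i,j}\in\F^{mn-1}$, one for each $(i,j)\in[m]\times[n]$. This gives $T=\sum_{i,j} T_{:,i,j}\times e_i\times e_j$, so $\rk{T}\le mn$.

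To save one term, note that $mn$ vectors in $\F^{mn-1}$ are linearly dependent. There is therefore a nontrivial relation $\sum_{i,j}\lambda_{ij}T_{:,i,j}=0$, and equivalently the $m\times n$ matrix $\Lambda=(\lambda_{ij})$ is a nonzero matrix orthogonal to every axis-0 slice. Apply invertible changes of basis $Q_1,Q_2$ along axes 1 and 2. These act on $\Lambda$ by the contragredient action, so we may assume $\lambda_{00}=1$. The relation then expresses $T_{:,0,0}$ as a combination $-\sum_{(i,j)\ne(0,0)}\lambda_{ij}T_{:,i,j}$, which gives
\[T=\sum_{(i,j)\ne(0,0)} T_{:,i,j}\times\paren{e_i\times e_j-\lambda_{ij}\,e_0\times e_0}.\]
This is $mn-1$ terms, but the second factors $e_i\times e_j-\lambda_{ij}e_0\times e_0$ are matrices, not outer products, so the expansion is not yet a decomposition. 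The fix is to group the terms.

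\textbf{Fixing the grouping.} Choose bases so that $\Lambda$ has a simple form; by Lemma \ref{2d}, we can take it to be a matrix $H$ of rank $r$ with antidiagonal ones. The simplest case is $r=1$ with $\lambda_{00}=1$ and all other $\lambda_{ij}=0$, where the relation is just $T_{:,0,0}=0$ and the remaining $mn-1$ terms $T_{:,i,j}\times e_i\times e_j$ already form a decomposition. For general $r$, we may instead assume, by choosing the basis via the contragredient action, that $\Lambda=e_0e_0^\top+\lambda_{01}e_0e_1^\top$, or more generally that $\Lambda$ has the form $e_0u^\top$ with $u_0=1$. Any nonzero $\Lambda$ can be brought to such a form only when it has rank 1, so this reduction is not available for every $\Lambda$.

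The cleaner route is to use only \emph{one} rank-1 direction. Pick any nonzero $\Lambda$ and write it as a sum of rank-one matrices; the idea is that merging is always possible via row $0$ alone. Concretely, choose $Q_1,Q_2$ so that the first row of $\Lambda$ is $e_0^\top$. Then the relation reads
\[T_{:,0,0}=-\sum_{i\ge1,j}\lambda_{ij}T_{:,i,j}.\]
Substituting, the $(0,0)$ contribution merges into the terms for $i\ge1$ as $T_{:,i,j}\times\paren{e_i-\lambda_{ij}e_0}\times e_j$, except that this also introduces cross terms involving $e_0\times e_j$.

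\textbf{Main obstacle.} I expect this merging step to be the hardest. To handle it, I would organize the slices differently: write $T=\sum_j T_{:,:,j}\times e_j$ as a sum of $n$ matrix pencils, where each $k\times m$ matrix $T_{:,:,j}$ decomposes into $m$ rank-one terms. The linear dependency then allows one column to be absorbed through a change of basis of axis 2 that sends the relation to a single coordinate. If this fails, I would fall back to the dimension-count argument used in \cite{atkinson}, Thm.~2.

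\textbf{Lower bound.} This follows from Lemma \ref{lower-bound-expand} applied with $k=0$, $m=n=0$, $m'=m$, $n'=n$, and $p=mn-1$. That gives $\maxrank(mn-1,m,n)\ge 0+mn-1$. Alternatively, the explicit tensor whose $mn-1$ axis-0 slices are distinct 1-hot matrices has rank $mn-1$, which one can see by subbing out every axis-0 slice in turn.
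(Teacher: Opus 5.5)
\textbf{The upper bound has a genuine gap.} Your lower bound is fine; more directly, $mn-1$ linearly independent axis-0 slices force rank at least $mn-1$.

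The upper bound is never completed. You correctly find a nonzero $\Lambda$ orthogonal to every axis-0 slice. You also correctly see that $\sum_{(i,j)\ne(0,0)}T_{:,i,j}\times(e_i\times e_j-\lambda_{ij}\,e_0\times e_0)$ is not a decomposition. What follows, however, covers only three things:
\begin{enumerate}[(1)]
    \item the case $\rk{\Lambda}=1$;
    \item a substitution that, as you note yourself, leaves unresolved cross terms;
    \item a ``Main obstacle'' plan that cannot work as stated.
\end{enumerate}
A change of basis on axes 1 and 2 can send the relation to a single coordinate only when $\Lambda$ has rank 1, which you already showed is not the general case. Deferring to Atkinson is not an argument. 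So for $\rk{\Lambda}\ge 2$ there is no proof.

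\textbf{The missing idea.} Stop manipulating $T$ directly. Instead, show that the hyperplane $\Lambda^\perp=\{A\in\F^{m\times n}:\sum_{i,j}A_{i,j}\Lambda_{i,j}=0\}$, which contains every axis-0 slice, is spanned by $mn-1$ rank-one matrices $u_\ell\times w_\ell$. Then writing each slice as $T_{h,\dots}=\sum_\ell c_{h\ell}\,u_\ell\times w_\ell$ gives $T=\sum_\ell c_{:,\ell}\times u_\ell\times w_\ell$.

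Via the contragredient action, take $\Lambda=\M{I_r&0\\0&0}$. Then $\Lambda^\perp=\{A:\sum_{i<r}A_{i,i}=0\}$, and it is spanned by:
\begin{itemize}
    \item the $e_i\times e_j$ with $(i,j)\notin\{(i,i):i<r\}$, which gives $mn-r$ matrices;
    \item the matrices $(e_0+e_i)\times(e_0-e_i)$ for $1\le i<r$, which gives $r-1$ more.
\end{itemize}
This works because $(e_0+e_i)\times(e_0-e_i)+e_0\times e_i-e_i\times e_0=e_0\times e_0-e_i\times e_i$. The total is $mn-1$.

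\textbf{How this fixes your own expansion.} With this $\Lambda$, the only non-rank-one terms in your expansion are $T_{:,i,i}\times(e_i\times e_i-e_0\times e_0)$ for $1\le i<r$. Replace each by
\[-T_{:,i,i}\times(e_0+e_i)\times(e_0-e_i)-T_{:,i,i}\times e_0\times e_i+T_{:,i,i}\times e_i\times e_0.\]
The resulting cross terms are absorbed into the rank-one terms you already have, $T_{:,0,i}\times e_0\times e_i$ and $T_{:,i,0}\times e_i\times e_0$. Their coefficient vectors become $T_{:,0,i}-T_{:,i,i}$ and $T_{:,i,0}+T_{:,i,i}$ respectively. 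So no new terms appear, and the count stays at $mn-1$.
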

\begin{proof}
$\ge mn-1$: clear.

$\le mn-1$: for any $T\in\F^{(mn-1)\times m\times n}$, there exists some nonzero $M\in\F^{m\times n}$ s.t. $\span{\brace{T_{i,\dots}}_i}\subseteq S:=\brace{A\in\F^{m\times n}: \sum_{i,j} A_{i,j} M_{i,j}=0}$. WLOG assume $M=\M{I_r & O \\ O & O}$ for some $r\ge 1$.
Observe that the set $\brace{e_i\times e_j: (i,j)\in [m]\times [n] \setminus \brace{(i,i): 0\le i<r}} \cup \brace{(e_0+e_i)\times (e_0-e_i): 1\le i<r}$ consists of $mn-1$ many rank-1 matrices whose span is precisely $S$, as $(e_0+e_i)\times (e_0-e_i)+e_0\times e_1-e_i\times e_0=e_0\times e_0-e_i\times e_i$.
\end{proof}

Finally, it can be confirmed with our algorithm in Section \ref{sec:cpd} that the $3\times 3\times 2$ tensor \[\M{\M{\\&1\\1} & \M{&1\\1\\&} & \M{1\\\\1&1}}\] has rank 5 over $\F_2$. Combining this with previous lemmas yields the following:

\begin{lemma}
\label{lower-bound-minus-3}
For all $m\ge 2,\ n\ge 2$, and $(m,n)\ne (2,2)$, $\maxrank_{\F_2}(mn-3,m,n)=mn-1$.
\end{lemma}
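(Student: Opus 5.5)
We prove the two inequalities separately. The upper bound is monotonicity plus Lemma \ref{max-rank-one-below}; the lower bound pads a known high-rank tensor using Lemma \ref{lower-bound-expand}.

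\emph{Upper bound.} Any $(mn-3)\times m\times n$ tensor can be padded with zero slices along axis 0 to an $(mn-1)\times m\times n$ tensor of the same rank. Hence
\[\maxrank_{\F_2}(mn-3,m,n)\le\maxrank(mn-1,m,n)=mn-1\]
by Lemma \ref{max-rank-one-below}. This uses $mn\ge 3$, which holds since $(m,n)\ne(2,2)$ and $m,n\ge 2$ give $mn\ge 6$.

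\emph{Lower bound.} The base case is the displayed $3\times 3\times 2$ tensor of rank $5$ over $\F_2$, so $\maxrank_{\F_2}(3,3,2)\ge 5$. Since rank is invariant under permuting axes, we also have $\maxrank_{\F_2}(3,2,3)\ge 5$.

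By symmetry in $m,n$, assume $m\ge 3$ and $n\ge 2$; this is possible because $(m,n)\ne(2,2)$ and $m,n\ge 2$. Apply Lemma \ref{lower-bound-expand} with $k=3$, base shape $3\times 2$, $m'=m$, $n'=n$ and $p=mn-6$. The hypotheses hold: $m'\ge 3$, $n'\ge 2$, and $p=m'n'-3\cdot 2$. This gives
\[\maxrank_{\F_2}(mn-3,m,n)\ge \maxrank_{\F_2}(3,3,2)+mn-6\ge mn-1.\]

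\emph{Main obstacle.} The only real content is the rank-$5$ claim for the $3\times3\times2$ tensor, which the paper takes as computer-verified via Section \ref{sec:cpd}. A hand proof would instead use the substitution method: sub out axis-$0$ slices to reduce to a $2\times3\times2$ tensor, then invoke known $m\times n\times 2$ results. Everything else is bookkeeping; one only needs to check that the axis-ordering convention in Lemma \ref{lower-bound-expand} matches the base shape, and the symmetry argument above handles this.
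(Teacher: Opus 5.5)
Your proof is correct and is essentially the paper's argument. The upper bound is zero-padding plus Lemma~\ref{max-rank-one-below}, and the lower bound is Lemma~\ref{lower-bound-expand} applied to the computer-verified rank-$5$ tensor of shape $3\times 3\times 2$; your swap of $m$ and $n$ for the case $m=2$ makes explicit a step the paper leaves implicit.

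One caveat concerns only your closing aside. A single axis-$0$ sub down to a $2\times 3\times 2$ tensor can never certify rank $5$, because every $2\times 3\times 2$ tensor has rank at most $3$ (Lemma~\ref{max-rank-one-below} with $m=n=2$, after permuting axes). That hand-proof route would therefore need further substitutions or a different argument.
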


As a consequence, $\maxrank(k,4,3)=11$ for $0\le k\le 11$, so we only need to run our algorithm on $k\times 4\times 3$ for $k\le 8$.
Furthermore, $\maxrank(k,3,3)=8$ for all $k\ge 6$, which, combined with prior work, solves max rank for all $k\times 3\times 3$ shapes.

\section{Results}
Our code is available at

\begin{center}
\url{https://github.com/coolcomputery/tensor-max-rank-mod2}.
\par
\end{center}

Tables \ref{tab:canonical} and \ref{tab:rank-stats} summarize our results for max rank via computational search.
Because the set of canonical tensors of shape $n_0\times n_1\times n_2$ naturally includes those of shape $(<n_0)\times n_1\times n_2$ up to zero-padding, we list elapsed times only for $n_0$.

To save space, each tensor $T$ is represented as the contraction $v\times_0 T$ where $v$ consists of formal variables.

\begin{table}
    \centering
    \begin{tabular}{|c|c|c|}
        \hline
        shape & time to enumerate canonicals (min) & time to compute ranks (min) \\
        \hline
        $4\times 4\times 4$ & 80 & 1900 \\
        \hline
        $8\times 4\times 3$ & 120 & 13 \\
        \hline
    \end{tabular}
    \caption{Elapsed time to enumerate and to compute ranks of all canonical tensors for each shape.}
    \label{tab:canonical}
\end{table}

\begin{table}
    \centering
    \begin{tabular}{|c|c|c|c|}
        \hline
        shape & \# canonicals & max rank & example \\
        \hline
        $2\times 4\times 4$ & 58 & 6 & $\M{&&&v_1\\&&v_1\\&v_1&&v_0\\v_1&&v_0}$ \\
        \hline
        $3\times 4\times 4$ & 5626 & 8 & $\M{&&v_2&v_1\\&&v_1+v_2&v_2\\v_2&v_1&&v_0\\v_1+v_2&v_2&v_0}$ \\
        \hline
        $4\times 4\times 4$ & 2295780 & 9 & $\M{&&v_3&v_2\\&&v_2+v_3&v_3\\v_3&v_2&v_1&v_0\\v_2+v_3&v_3&v_0+v_1&v_1}$ \\
        \hline
        \hline
        $2\times 4\times 3$ & 28 & 5 & $\M{&&v_1\\&v_1\\&&v_0\\v_1&v_0}$ \\
        \hline
        $3\times 4\times 3$ & 355 & 6 & $\M{&&v_2\\&v_2\\&&v_1\\v_2&v_1&v_0}$ \\
        \hline
        $4\times 4\times 3$ & 5626 & 8 & same as $3\times 4\times 4$ \\
        \hline
        $5\times 4\times 3$ & 42691 & 8 & - \\
        \hline
        $6\times 4\times 3$ & 115735 & 9 & $\M{&v_5&v_4\\v_5&&v_3\\v_4&v_3&v_5\\v_2&v_1&v_0}$ \\
        \hline
        $7\times 4\times 3$ & 152800 & 9 & - \\
        \hline
        $8\times 4\times 3$ & 158071 & 10 & $\M{&v_7&v_6 \\ v_7&v_6&v_5 \\ v_6&v_4&v_3 \\ v_2&v_1&v_0}$ \\
        \hline
    \end{tabular}
    \caption{Number of canonical tensors, maximum rank, and example maximum-rank tensors for each nontrivial shape.}
    \label{tab:rank-stats}
\end{table}

\begin{remark}
The max-rank $4\times 4\times 4$ tensor is equal to the Kronecker product of $\M{\M{\phantom{0}&\\&1}&\M{&1\\1}}$ and $\M{\M{&1\\1}&\M{1\\1&1}}$, which are precisely all of the non-isomorphic tensors of shape $2\times 2\times 2$ and rank 3 over $\F_2$.
\end{remark}

\begin{remark}
The max-rank $6\times 4\times 3$ and $8\times 4\times 3$ tensors can be constructed by augmenting the max-rank $3\times 3\times 3$ and $5\times 3\times 3$ tensors, respectively, to $*\times 4\times 3$ via Lemma \ref{lower-bound-expand}.
\end{remark}

For each tensor whose rank is less than its number of nonzero elements, we describe an optimal decomposition.

\begin{itemize}
    \item $3\times 4\times 4$:
    two copies of $\M{v_2&v_1\\v_1+v_2&v_2}
    =\M{v_2&v_2\\v_2&v_2}+\M{\phantom{0}&v_1+v_2\\&}+\M{&\\v_1&\phantom{0}}$,
    plus the 1-hot decomposition of $\M{&v_0\\v_0}$

    \item $4\times 4\times 4$:
    three copies of $\M{v_2&v_1\\v_1+v_2&v_2}$ (up to relabeling of variables)

    \item $6\times 4\times 3$:
    $\M{&v_5&v_4\\v_5&&v_3\\v_4&v_3&v_5}
    =\M{&\\\phantom{0}&\phantom{0}&v_3\\&}
    +\M{v_4&v_4&v_4\\&\\v_4&v_4&v_4}
    +\M{v_4+v_5&v_4+v_5&\phantom{0}\\&\\&}
    +\M{v_5&\phantom{0}&\phantom{0}\\v_5\\&}
    +\M{&\\&\\\phantom{0}&v_3+v_4&\phantom{0}}
    +\M{&\\&\\\phantom{0}&\phantom{0}&v_4+v_5}$, plus the 1-hot decomposition of $\M{v_2&v_1&v_0}$

    \item $8\times 4\times 3$: $\M{&v_7&v_6\\v_7&v_6}=\M{\phantom{0}&v_6+v_7&\phantom{0}\\&v_6+v_7}+\M{&\\v_7&v_7&\phantom{0}}+\M{\phantom{0}&v_6&v_6\\&}$, plus the 1-hot decomposition of $\M{&&v_5\\v_6&v_4&v_3\\v_2&v_1&v_0}$
\end{itemize}

\subsection{Lower bounds for $k\times 4\times 4$}
Combining our results for $\maxrank(k,4,3)$ with Lemmas \ref{lower-bound-expand} and \ref{lower-bound-minus-3} yields the following lower bounds for $\maxrank(k,4,4)$:
\begin{itemize}
    \item $\maxrank(5,4,4)\ge \maxrank(4,4,4)=9$
    \item $\maxrank(k,4,4)\ge \maxrank(4,4,3)+(k-4)=k+4$ for $6\le k\le 8$
    \item $\maxrank(k,4,4)\ge \maxrank(k-4,4,3)+4$ for $9\le k\le 12$
    \item $\maxrank(k,4,4)=15 \ \forall 13\le k\le 15$
\end{itemize}

Table \ref{tab:all-max-rank} lists the best known lower bounds for $\maxrank(k,4,4)$ to our knowledge, as well as max ranks for the other shapes we have analyzed.

\begin{table}
    \centering
    \begingroup
    \setlength{\tabcolsep}{5pt}
    \begin{tabular}{|c|ccccccccccccccccc|}
        \hline
        $k$ & 0 & 1 & 2 & 3 & 4 & 5 & 6 & 7 & 8 & 9 & 10 & 11 & 12 & 13 & 14 & 15 & 16 \\
        \hline
        $\maxrank(k,3,3)$ & 0 & 3 & 5 & 6 & 6 & 7 & 8 & 8 & 8 & 9 &&&&&&& \\
        $\maxrank(k,4,3)$ & 0 & 3 & 5 & 6 & 8 & 8 & \textbf{9} & \textbf{9} & \textbf{10} & 11 & 11 & 11 & 12 &&&& \\
        $\maxrank(k,4,4)$ & 0 & 4 & 6 & 8 & \textbf{9} & $\ge 9$ & $\ge 10$ & $\ge 11$ & $\ge 12$ & $\ge 12$ & $\ge 13$ & $\ge 13$ & $\ge 14$ & 15 & 15 & 15 & 16 \\
        \hline
    \end{tabular}
    \endgroup
    \caption{Table of known maximum ranks over $\F_2$ for various tensor shapes. Numbers in bold are new results.}
    \label{tab:all-max-rank}
\end{table}

\end{document}